\newtheorem{theorem}{\bf Theorem}
\newtheorem{corollary}{\bf Corollary}
\newtheorem{proposition}{\bf Proposition}
\begin{document}
%
\title{Data-Driven Contract Design for Multi-Agent Systems with Collusion Detection}
%
%
%

\author{Nayara~Aguiar, Parv~Venkitasubramaniam and~Vijay~Gupta
\thanks{Nayara Aguiar and Vijay Gupta are with the Department of Electrical Engineering, University of Notre Dame, IN, USA
        {\tt\small \{ngomesde,vgupta2\}@nd.edu}}
\thanks{Parv Vekitasubramaniam is with the Department of Electrical Engineering, Lehigh University, PA, USA 
        {\tt\small pav309@lehigh.edu}}
}

\maketitle

\begin{abstract}
In applications such as participatory sensing and crowd sensing, self-interested agents exert costly effort towards achieving an objective for the system operator. We study such a setup where a principal incentivizes multiple agents of different types who can collude with each other to derive rent. The principal cannot observe the efforts exerted directly, but only the outcome of the task, which is a noisy function of the effort. The type of each agent influences the effort cost and task output. For a duopoly in which agents are coupled in their payments, we show that if the principal and the agents interact finitely many times, the agents can derive rent by colluding even if the principal knows the types of the agents. 
However, if the principal and the agents interact infinitely often, the principal can disincentivize agent collusion through a suitable data-driven contract. 
\end{abstract}

\begin{IEEEkeywords}
Agent collusion, game theory, dynamic contract.
\end{IEEEkeywords}

%
\IEEEpeerreviewmaketitle

\section{Introduction}
Many scenarios in smart infrastructure systems require a system operator to incentivize self-interested agents to exert costly effort to make decisions that align with the goal of the operator. For instance, in participatory sensing, a system operator requires many autonomous sensors to take measurements to allow estimation of a global quantity. The operator cannot observe directly the effort of each sensing agent (possibly for privacy reasons) and the agent might not benefit directly from the goal of the operator and thus needs to be compensated based on noisy outputs. Incentive design in this setting has been analyzed, particularly in the context of participatory sensing and crowd sensing, for classical estimation as well as learning based tasks (see~\cite{gao,restuccia} and the references therein). 

One possible abstraction of the problem is as a Stackelberg game (e.g. as adopted in the context of jamming \cite{garnaev,mukherjee}). A principal seeks to incentivize self-interested rational agents to exert costly effort to perform a task. The principal acting as the leader issues a contract that specifies the payment structure. The agents as the followers respond by deciding whether to accept the contract or not. If the contract is accepted, the agents further decide how much effort to exert. If the agents and the principal interact multiple times, then the entire process is repeated. In this paper, we focus on a duopoly, where each agent $i \in \mathcal{N}:=\{1,2\}$ can be defined as a type $\theta_i$ which influences his utility and the quality of his task output. Further, we assume that the principal can observe only the task output, but not the effort exerted by each agent, leading to a problem of moral hazard. 

Single-agent, single-principal problems with such (or similar) abstractions have been widely explored in the literature. Most such works that consider multiple agents being present assume that either the agents are decoupled or that collusion is prohibited, thus reducing the analysis to pairwise interactions between each individual agent and the principal (possibly with an additional optimization problem at the principal on which agents to select for the contract). We focus on the effect of collusion, specifically seeking to answer the question if it is possible for the agents to derive rent through collusion. We show that there is a difference between the principal and the agents interacting finitely or infinitely many times. Rent is possible in the former situation, but the principal can propose a dynamic contract with a collusion detection mechanism disincentivizing collusion in the latter. Interestingly, this result holds even if the principal knows the types of the agents. 

Three lines of literature are particularly relevant here. Dynamic contract design for multi-agent systems with information assymetry has been studied in game theory (see, e.g.,~\cite{holmstrom1999managerial,prat2014dynamic}. In signal processing applications, a repeated game is analyzed in \cite{dobakhshari2}, where the principal performs random verification and assigns a reputation score to each sensor. 
A dynamic estimation problem is studied in \cite{farokhi_dynamic}, leading to an equilibrium in which strategic sensors use a memory-less policy. In \cite{venkitasubramaniam}, a data-driven contract is analyzed for the situation when 
the quality of the output produced by participating agents depends on their true types. As opposed to these works, we consider multiple agents that are coupled both due to their compensation and possibility of collusion. 

Another line of relevant literature is that of payment schemes which couple agents. Competition among data aggregators is studied in \cite{westenbroek}, where aggregators are coupled in their costs and data sources are coupled in their compensation. Payment coupling is also considered in \cite{dobakhshari} in a 
demand response problem. However, the issue of collusion has received much less attention in the literature.

The third line of literature is one which studies collusion. In \cite{farokhi}, the formation of a coalition is analyzed for a class of static estimation problems with strategic sensors. In a static setting, the principal does not have information about whether agents are communicating to devise a joint strategy, which 
may lead to agents being able to derive rent by colluding. 
In contrast, our work considers the problem when the agents and the principal interact repeatedly and have the ability to change their strategies. We show that, in the limit of infinitely many interactions, the principal can utilize the previous  task outputs to perform a hypothesis test on whether collusion is underway and update the contract offered to disincentivize it. 

The remainder of this paper is organized as follows. Section~\ref{sec:formulation} introduces the problem formulation proposed. The static problem is solved in Section~\ref{sec:static}, where we derive conditions for collusion to benefit the agents. Section~\ref{sec:dynamic} presents the dynamic contract which enables the principal to verify if collusion is present. Lastly, Section~\ref{sec:conclusion} concludes this work.

\section{Problem Formulation}\label{sec:formulation}
Consider a Stackelberg game in which a principal, as a leader, establishes a contract to incentivize two agents to exert costly effort to perform a task. The task output of each agent $i \in \mathcal{N}:=\{1,2\}$ is represented by the random variable $X_i$, which can be observed by the principal. 
We let $X_i$ be defined by a function $f_i(a_i,\theta_i,\eta_i):\mathcal{A}\times \Theta\times \mathbb{R} \mapsto \mathbb{R}$, which depends on the agent's effort $a_i$ and type $\theta_i$, and on the random noise $\eta_i$. The noise variables $\{\eta_i\}$ are assumed to be independent and identically
distributed across the agents, and  bounded in the interval $\eta_i\in[\underline{\eta}_i,\overline{\eta}_i]$. 

The timeline for a specified time horizon $K$ is as follows. Initialize $k=0$.
\begin{enumerate}
    \item $t=4k+1$: The principal issues a contract defined by the payment $w_i(\mathbf{X})$, where $\mathbf{X}:=\{X_1,X_2\}$. 
    \item $t=4k+2$: Each agent decides whether or not to (i) accept this contract, and (ii) collude with the other agent. 
    \item $t=4k+3$: Non-colluding agents exert an effort $a_i$ which maximizes their individual utility, and colluding agents exert the effort $a_i$ that increases both of their utilities as compared to the non-colluding solution.
    \item $t=4k+4$: For each participating agent, the effort $a_i$ produces an output $X_i$, which is compensated by the principal according to $w_i(\mathbf{X})$.
    \item $k=k+1$ and repeat steps 1-5 while $k<K$.
\end{enumerate}

In this problem, the utility of the principal is given by
\begin{equation}\label{eq:U}
    U = \mathbb{E}\left[S(\mathbf{X})-\sum_{i=1}^2w_i(\mathbf{X})\right],
\end{equation}
where $S(\mathbf{X})=X_1+X_2$ is the benefit derived due to the outputs produced, and the expectation is taken over the output noise. The utility of each agent is defined as
\begin{equation}
    u_i = \mathbb{E}\left[w_i(\mathbf{X})-h(a_i,\theta_i)\right],
\end{equation}
where, for concreteness, the effort cost is defined as $h(a_i,\theta_i) = \frac{a_i^2}{2\theta_i}$, so that it is increasing in the effort, and decreasing in the type. Finally, we assume that $X_i = a_i+Q(\theta_i)\eta_i $, where $Q(\theta_i)$ is a increasing function of the type. 

\section{One-Shot Contract}\label{sec:static}
Having a limited budget, the principal uses a Cournot-like payment function which discourages large aggregated efforts: 
\begin{equation}\label{eq:wcournot}
    w_i(\mathbf{X}) = X_i\left(\lambda-\sum_{i=1}^2X_i\right),
\end{equation}
where $\lambda$ is a design variable. Given the utility function \eqref{eq:U} and the payments \eqref{eq:wcournot}, the problem of the principal is given by
\begin{align}
\label{eq:principal} &\underset{\lambda}{\max}~U\\
\nonumber    
\text{s.t.}~\mathbb{E}[w_i(\mathbf{X})-h&(a_i^*,\theta_i)]\geq 0 ~\forall i\\
\nonumber    
a_i^* = \text{argmax}~& \mathbb{E}\left[w_i(\mathbf{X})-h(a_i,\theta_i)\right]~\forall i,
\end{align}
where the first constraint is the individual rationality (IR) constraint which reflects that the agents will only accept the contract if their expected utility is non-negative, and the second constraint is the incentive compatibility constraint 
which enforces that the efforts considered in the problem are such that they maximize the agents' utilities.

After solving his problem, the principal issues the contract to the agents. The payment \eqref{eq:wcournot} couples the outputs from both agents, and therefore competition is induced at the agents. Each agent solves the problem 
$   \underset{a_i}{\max}~u_i(X_i,X_{-i}),$
where we make explicit the dependence of agent $i$'s utility on his own output $X_i$ and the output of the other agent $X_{-i}$.

We begin by analyzing this duopoly in a scenario where the agents play their Cournot strategies, and then we establish conditions for which the agents can derive rent by colluding.

\paragraph{Cournot Duopoly}
We use backwards induction to find the equilibrium of this Cournot competition. 

\begin{proposition}\label{prop:cournot}
Given payments $w_i(\mathbf{X})$ as expressed in \eqref{eq:wcournot}, and in the absence of collusion,
the principal maximizes his own utility by choosing
\begin{equation}\label{eq:lambdastar}
    \lambda^* = \max \left\{\frac{(2\theta_1+1)(2\theta_2+1)-\theta_1\theta_2}{2(\theta_1+1)(\theta_2+1)},\underbar{$\lambda$}\right\}
\end{equation}
where $\underbar{$\lambda$}$ is the minimum value that guarantees individual rationality for all agents, and is given by
\begin{equation}
    \underbar{$\lambda$} = \max \left\{ \frac{\sqrt{2}Q(\theta_i)\sigma[(2\theta_1+1)(2\theta_2+1)-\theta_1\theta_2]}{(\theta_{-i}+1)\sqrt{\theta_i(2\theta_i+1)}} \right\}.
\end{equation}
In response, the optimal efforts of agents $i\in\{1,2\}$ is 
\begin{equation}\label{eq:aopt}
    a_i^* = \frac{\lambda^* \theta_i(\theta_{-i}+1)}{(2\theta_i+1)(2\theta_{-i}+1)-\theta_i\theta_{-i}},
\end{equation}
where $\theta_{-i}$ is the other agent in the duopoly.
\end{proposition}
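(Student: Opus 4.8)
The plan is to solve the Stackelberg game by backward induction: first pin down the Cournot equilibrium efforts $a_i^*(\lambda)$ induced by a fixed $\lambda$, then substitute these back into $U$ and optimize over $\lambda$ subject to the IR constraints. Throughout I take the noise to be zero-mean with variance $\sigma^2$, consistent with the appearance of $\sigma$ in the statement, so that $\mathbb{E}[X_i]=a_i$, $\mathbb{E}[X_iX_{-i}]=a_ia_{-i}$, and $\mathbb{E}[X_i^2]=a_i^2+Q(\theta_i)^2\sigma^2$.

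\emph{Effort stage.} Fixing $\lambda$ and assuming the contract is accepted, I would expand the agent's utility under \eqref{eq:wcournot} into
\[
u_i = \lambda a_i - a_i^2 - a_i a_{-i} - Q(\theta_i)^2\sigma^2 - \frac{a_i^2}{2\theta_i}.
\]
Since $\partial^2 u_i/\partial a_i^2 = -2-1/\theta_i<0$, $u_i$ is strictly concave in $a_i$, so the unique best response is given by the first-order condition $\frac{2\theta_i+1}{\theta_i}a_i + a_{-i} = \lambda$. Writing this pair of equations (one per agent) as a $2\times2$ linear system whose determinant $\frac{(2\theta_1+1)(2\theta_2+1)}{\theta_1\theta_2}-1$ is strictly positive for positive types, I would solve for the unique, nonnegative equilibrium and obtain \eqref{eq:aopt}. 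It is convenient to abbreviate the (symmetric) denominator as $\Delta := (2\theta_1+1)(2\theta_2+1)-\theta_1\theta_2$.

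\emph{Principal stage.} Substituting the equilibrium outputs into \eqref{eq:U} and using $\mathbb{E}[(X_1+X_2)^2]=(a_1+a_2)^2+\sigma^2(Q(\theta_1)^2+Q(\theta_2)^2)$, I would write $a_1^*+a_2^* = \lambda B$ with $B := \frac{2\theta_1\theta_2+\theta_1+\theta_2}{\Delta}$, so that $U$ becomes the quadratic $U(\lambda) = \lambda B - \lambda^2 B(1-B) + \text{const}$, the $\lambda$-independent term being $\sigma^2(Q(\theta_1)^2+Q(\theta_2)^2)$. The crux of the computation is the identity $\Delta - (2\theta_1\theta_2+\theta_1+\theta_2) = (\theta_1+1)(\theta_2+1)$, which gives $1-B = \frac{(\theta_1+1)(\theta_2+1)}{\Delta}\in(0,1)$; hence $U$ is strictly concave in $\lambda$ with unconstrained maximizer $\lambda = \frac{1}{2(1-B)} = \frac{\Delta}{2(\theta_1+1)(\theta_2+1)}$, which is the first argument of the max in \eqref{eq:lambdastar}.

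\emph{IR and conclusion.} Plugging \eqref{eq:aopt} into $u_i$ and using the first-order condition once more to cancel the $\lambda a_i$ term, the equilibrium utility simplifies to $u_i = \frac{2\theta_i+1}{2\theta_i}(a_i^*)^2 - Q(\theta_i)^2\sigma^2$, so $u_i\ge0$ is equivalent (since $a_i^*>0$) to the linear bound $\lambda \ge \frac{\sqrt2\,Q(\theta_i)\sigma\,\Delta}{(\theta_{-i}+1)\sqrt{\theta_i(2\theta_i+1)}}$; taking the larger of the two bounds over $i$ gives $\underline\lambda$. Since the principal's objective is concave in $\lambda$ and the feasible region is $\{\lambda\ge\underline\lambda\}$, the optimum equals the unconstrained maximizer when that point already satisfies IR and equals $\underline\lambda$ otherwise, i.e. $\lambda^* = \max\{\,\cdot\,,\underline\lambda\}$ as claimed. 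I expect the genuine obstacle to be bookkeeping in the algebra — in particular establishing the factorization of $1-B$ and confirming that $B\in(0,1)$ so that the quadratic is concave — while the structural steps (concavity of each stage problem, uniqueness of the equilibrium, and the monotonicity argument producing the max) are routine.
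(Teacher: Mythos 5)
Your proposal is correct and follows essentially the same backward-induction route as the paper: solve the agents' concave effort stage via first-order conditions to get \eqref{eq:aopt}, substitute $a_1^*+a_2^*=\lambda g(\theta_1,\theta_2)$ into the principal's utility to obtain a concave quadratic in $\lambda$ whose unconstrained maximizer is the first term in \eqref{eq:lambdastar}, and then impose the IR lower bound $\underline{\lambda}$ via the $\max$. Your explicit factorization $1-B=(\theta_1+1)(\theta_2+1)/\Delta$ and the simplification $u_i=\tfrac{2\theta_i+1}{2\theta_i}(a_i^*)^2-Q(\theta_i)^2\sigma^2$ are just cleaner renderings of the same algebra the paper carries out.
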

\begin{proof}
For a given $\lambda$, the utility of agent $i=1$ is
\begin{align}
    u_1
    &= a_1(\lambda - a_1 -a_2)- Q(\theta_1)^2\sigma^2 -a_1^2/2\theta_1,
\end{align}
where we used the fact that the outputs $X_i$ are independent, $\mathbb{E}[X_i]=a_i~\forall i$, and $\mathbb{E}[X_i^2]=a_i^2+Q(\theta_1)^2\sigma^2~\forall i$. This function is concave in the decision variable $a_1$, and thus we use the first order condition to derive the equilibrium effort
\begin{equation}\label{eq:a1star}
    \frac{\partial u_1}{\partial a_1}=\lambda-2a_1-a_2-\frac{a_1}{\theta_1}=0\implies a_1^*=\frac{\theta_1(\lambda-a_2^*)}{2\theta_1+1}.
\end{equation}
Proceeding similarly for agent $i=2$, we can find an expression for the equilibrium effort $a_2^*$ as a function of $a_1^*$, then substitute it in \eqref{eq:a1star} to find the expression \eqref{eq:aopt}. For individual rationality to hold, the agents must have a non-negative utility. For the optimal efforts derived, this condition reduces to
\begin{equation}\label{eq:uigeq0}
    u_i=\frac{\lambda^2\theta_i(\theta_{-i}+1)^2(2\theta_i+1)}{2\left[(2\theta_i+1)(2\theta_{-i}+1)-\theta_i\theta_{-i}\right]^2}- Q(\theta_i)^2\sigma^2\geq 0~\forall i,
\end{equation}
which establishes a lower bound on $\lambda$.
%
With these 
optimal efforts, 
the principal's utility can be written as
\begin{align*}
    U 
    = \lambda &g(\theta_1,\theta_2)-\lambda^2(g(\theta_1,\theta_2)-g(\theta_1,\theta_2)^2)\\
    &+ [Q(\theta_1)^2+Q(\theta_2)^2]\sigma^2,
\end{align*}
where $g(\theta_1,\theta_2)=(a_1^*+a_2^*)/\lambda$ and is given by
\begin{equation}
    g(\theta_1,\theta_2) = \frac{\theta_1(\theta_2+1)+\theta_2(\theta_1+1)}{(2\theta_1+1)(2\theta_2+1)-\theta_1\theta_2}.
\end{equation}
The principal's utility can be shown to be concave in $\lambda$. Thus, we can use the first order condition to find the optimal decision as in the first expression in \eqref{eq:lambdastar}. This value of $\lambda^*$ satisfies the individual rationality condition \eqref{eq:uigeq0} for all agents if
\begin{equation}
   Q(\theta_i)^2 \sigma ^2 \leq \frac{\theta_i(2\theta_i+1)}{8(\theta_i+1)^2}~\forall i.
\end{equation}
If the inequality above does not hold, then $\lambda^*$ must be equal to maximum of the lower bounds established through \eqref{eq:uigeq0}.
\end{proof}

The optimal effort \eqref{eq:aopt} can be shown to be increasing in agent $i$'s own type for a fixed $\theta_{-i}$, and thus agents with higher type values will exert more effort. Conversely, if agent $i$'s type is fixed, then his effort is decreasing with $\theta_{-i}$. This is due to the trade-off established through the payment function. 
We can also observe that agents with high output variance will require higher payments to participate in the contract, as their output may be very low even if a high effort is exerted.

\paragraph{Collusion Strategy}
Collusion becomes individually rational for both agents if the utility they derive by exerting the optimal effort \eqref{eq:aopt} can be surpassed by picking off-equilibrium efforts that will yield a surplus. 
In essence, the agents will 
collude if there exists a Pareto improvement 
that makes them both better off. Formally, this 
condition can be expressed as
\begin{equation}\label{eq:collusion}
    \exists~(\overline{a}_i,\overline{a}_{-i},P)  ~\text{s.t.}~ u_i(\overline{a}_i,\overline{a}_{-i})+P>u_i(a_i^*,a_{-i}^*)~\forall i,
\end{equation}
where $P$ is a possible side payment among the agents, which is positive (negative) for the agent receiving (making) it. 

Consider a scenario in which the two agents collude to exert efforts which, combined, yield a certain $\theta_i$-monopoly effort. Such effort is the optimal effort exerted by an agent of type $\theta_i$ when he is the only agent providing outputs to the principal, and is defined as $a^M_i = \lambda \theta_i/(2\theta_i+1)$.

For simplicity, we assume the following collusion strategies: 
\begin{itemize}
    \item if agents are of the same type, they equally split the amount of effort exerted to reach monopoly level, and no side payment is performed, and
    \item if agents are of different types, the agent with higher type exerts all the monopoly-level effort and makes a side payment to the lower-type agent, who exerts no effort.
\end{itemize} 

The next result identifies conditions for the agents to have an incentive to collude.

\begin{theorem}\label{th:collusion}
If the agents are of the same type, that is $\theta_1=\theta_2=\theta$, then collusion to mimic the $\theta_1$-monopoly effort by choosing the efforts $ \overline{a}_1=\overline{a}_2=a_1^M/2$
is individual rational for both agents if $\theta>(1+\sqrt{17})/8\approx 0.64$. 
Further, if the agents are of different types $\theta_1=\theta_h>\theta_\ell=\theta_2$, and satisfy one of conditions
\begin{align}
    & \text{C1:}~\theta_\ell \geq 1+\sqrt{2} \\
    \text{or}~&\text{C2:} \begin{cases}
     0 \leq \theta_\ell < 1+\sqrt{2} ~ \text{and} ~\\ \theta_h \geq \frac{\sqrt{(\theta_\ell+1)(17\theta_\ell+9)}+3\theta_\ell+1}{2(\theta_\ell+2)}
    \end{cases}
\end{align}
then there exists a side payment $P$ from the $\theta_h$ agent to the $\theta_\ell$ agent that makes the colluding efforts $\overline{a}_1 = a_h^M, ~ \overline{a}_2=0$ to mimic a $\theta_h$-monopoly
individual rational.
\end{theorem}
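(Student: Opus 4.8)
The plan is to reduce both parts of the theorem to elementary polynomial inequalities in the types, using the closed-form expected utilities already available from Proposition~\ref{prop:cournot}. Recall from its proof that for any efforts $(a_i,a_{-i})$ the expected utility of agent $i$ is $u_i(a_i,a_{-i}) = a_i(\lambda-a_i-a_{-i}) - Q(\theta_i)^2\sigma^2 - a_i^2/(2\theta_i)$, and that the Cournot payoffs $u_i^*:=u_i(a_i^*,a_{-i}^*)$ are given by~\eqref{eq:uigeq0}. Since utility is transferable via the side payment, condition~\eqref{eq:collusion} is feasible if and only if collusion raises the \emph{sum} of the two agents' utilities: with $\theta_1=\theta_2$ one sets $P=0$ and symmetry reduces~\eqref{eq:collusion} to the single inequality $\overline u>u^*$, whereas for $\theta_1=\theta_h>\theta_\ell=\theta_2$ a valid $P$ exists iff $\overline u_h+\overline u_\ell>u_h^*+u_\ell^*$, and then one checks that the needed $P$ may be chosen positive, which holds because $u_\ell^*-\overline u_\ell = \lambda^2\theta_\ell(\theta_h+1)^2(2\theta_\ell+1)/(2D^2)\ge 0$, with $D:=(2\theta_h+1)(2\theta_\ell+1)-\theta_h\theta_\ell$. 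In all these comparisons the variance terms $Q(\cdot)^2\sigma^2$ cancel and the common factor $\lambda^2>0$ cancels, so the resulting conditions involve only the types.

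For the same-type case I would substitute $\theta_1=\theta_2=\theta$: the symmetric Cournot effort is $a^*=\lambda\theta/(3\theta+1)$ (from~\eqref{eq:aopt}), giving $u^* = \lambda^2\theta(2\theta+1)/\bigl(2(3\theta+1)^2\bigr) - Q(\theta)^2\sigma^2$, while $\overline a_i = a_1^M/2 = \lambda\theta/\bigl(2(2\theta+1)\bigr)$ makes the total effort equal to $a_1^M$ and yields $\overline u = \lambda^2\theta(4\theta+3)/\bigl(8(2\theta+1)^2\bigr) - Q(\theta)^2\sigma^2$. Clearing the positive denominators in $\overline u>u^*$ reduces it to $(4\theta+3)(3\theta+1)^2 > 4(2\theta+1)^3$, i.e.\ $4\theta^3+3\theta^2-2\theta-1>0$, i.e.\ $(\theta+1)(4\theta^2-\theta-1)>0$; since $\theta>0$ this is $4\theta^2-\theta-1>0$, equivalent to $\theta>(1+\sqrt{17})/8$.

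For the different-type case, $\overline a_1=a_h^M$ and $\overline a_2=0$ give $\overline u_h = \lambda^2\theta_h/\bigl(2(2\theta_h+1)\bigr)-Q(\theta_h)^2\sigma^2$ (the $\theta_h$-monopoly payoff, since $\mathbb{E}[X_2]=0$ kills the cross term) and $\overline u_\ell=-Q(\theta_\ell)^2\sigma^2$, while $u_h^*,u_\ell^*$ from~\eqref{eq:uigeq0} share the denominator $2D^2$. Forming $\overline u_h+\overline u_\ell-(u_h^*+u_\ell^*)>0$, multiplying through by $2D^2(2\theta_h+1)/\lambda^2>0$, and using the identity $D=(2\theta_h+1)(\theta_\ell+1)+\theta_\ell(\theta_h+1)$ to cancel the top-degree terms, the inequality collapses to the quadratic (in $\theta_h$)
\begin{equation*}
 g(\theta_h,\theta_\ell):=(\theta_\ell+2)\theta_h^2-(3\theta_\ell+1)\theta_h-(2\theta_\ell+1)>0.
\end{equation*}
As a function of $\theta_h$ this is an upward parabola with negative constant term (for $\theta_\ell\ge 0$), hence exactly one positive root, namely $\bigl(3\theta_\ell+1+\sqrt{(\theta_\ell+1)(17\theta_\ell+9)}\bigr)\big/\bigl(2(\theta_\ell+2)\bigr)$ — its discriminant is $(3\theta_\ell+1)^2+4(\theta_\ell+2)(2\theta_\ell+1)=(\theta_\ell+1)(17\theta_\ell+9)$ — so for $\theta_h>0$ we have $g>0$ iff $\theta_h$ exceeds this root, which is exactly condition C2 when $\theta_\ell<1+\sqrt2$. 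For C1 it suffices to show $\theta_\ell\ge 1+\sqrt2$ already places $\theta_\ell$ at or beyond that positive root (so $\theta_h>\theta_\ell$ then gives $g>0$), which follows from $g(\theta_\ell,\theta_\ell)=\theta_\ell^3-\theta_\ell^2-3\theta_\ell-1=(\theta_\ell+1)(\theta_\ell^2-2\theta_\ell-1)\ge 0\iff\theta_\ell\ge 1+\sqrt2$.

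The same-type algebra is short; the main obstacle is the different-type reduction, where a naive cross-multiplication produces a seemingly degree-five polynomial. The decisive step is spotting the factorization $D=(2\theta_h+1)(\theta_\ell+1)+\theta_\ell(\theta_h+1)$ (equivalently $D=3\theta_h\theta_\ell+2\theta_h+2\theta_\ell+1$), which forces the high-degree contributions of $\overline u_h$ and $u_h^*$ to cancel and leaves only $g$. A minor point is the boundary of C2: since~\eqref{eq:collusion} is strict while the two total utilities coincide exactly when $\theta_h$ equals the positive root of $g$, that boundary is only weakly profitable, so C2 should be read with a strict inequality or flagged as such. One should track signs when clearing denominators, but $D$, $2\theta_h+1$, $\lambda^2$ and the types are all positive under the standing assumptions, so no inequality reverses.
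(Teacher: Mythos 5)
Your proof is correct. The same-type half is, computation for computation, the paper's own argument: both compare $\overline u=\lambda^2\theta(4\theta+3)/\bigl(8(2\theta+1)^2\bigr)$ with the Cournot payoff $\lambda^2\theta(2\theta+1)/\bigl(2(3\theta+1)^2\bigr)$ and land on the threshold $(1+\sqrt{17})/8$ (the paper's quintic $4\theta^5+11\theta^4+8\theta^3>2\theta^2+4\theta+1$ is exactly your cubic $4\theta^3+3\theta^2-2\theta-1>0$ multiplied by $(\theta+1)^2$). For the different-type half you take a genuinely different and more complete route. The paper parametrizes the side payment as $|P|=\alpha\,\mathbb{E}[w_h(\mathbf{X})]$, extracts an upper bound on $\alpha$ from the high type's IR and a lower bound from the low type's IR, and then merely \emph{asserts} that C1/C2 make the interval nonempty ``for the case $R_\ell=0$,'' never carrying out that reduction; its bounds also retain variance terms $R_h,R_\ell$. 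You instead use the transferable-utility observation that a valid $P$ exists iff the joint surplus increases (and is automatically of the right sign since $u_\ell^*\ge\overline u_\ell$), under which the variance terms cancel outright, and you actually perform the algebra — via the decomposition $D=(2\theta_h+1)(\theta_\ell+1)+\theta_\ell(\theta_h+1)$ — that collapses the surplus condition to $(\theta_\ell+2)\theta_h^2-(3\theta_\ell+1)\theta_h-(2\theta_\ell+1)>0$, whose positive root is precisely the C2 threshold and whose value at $\theta_h=\theta_\ell$ factors as $(\theta_\ell+1)(\theta_\ell^2-2\theta_\ell-1)$, yielding C1. Your approach thus derives the stated conditions rather than postulating them, and shows they are essentially tight for this collusion strategy. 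Two small caveats, both defects of the theorem statement rather than of your argument: the collapse to $g>0$ silently divides by the factor $\theta_\ell(\theta_h+1)$, so at $\theta_\ell=0$ the surplus is identically zero and no strictly profitable $P$ exists even though C2 formally admits $\theta_\ell=0$; and, as you already flag, equality at the C2 boundary gives only a weak improvement against the strict inequality in \eqref{eq:collusion}.
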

\begin{proof}
Agents of the same type $\theta$ collude by exerting the same effort and without making a side payment. 
Then, their IR constraint \eqref{eq:collusion} is the same and can be written as
\begin{equation}
    \overline{a}_i(\lambda - \overline{a}_i -\overline{a}_{-i}) -\overline{a}_i^2/2\theta > a_i^*(\lambda - a_i^* -a_{-i}^*) -{a_i^*}^2/2\theta.
\end{equation}
Substituting the proposed collusion efforts $\overline{a}_1=\overline{a}_2=a_1^M/2$ and the Cournot strategies \eqref{eq:aopt}, we can find the inequality 
\begin{align}
    \frac{\lambda^2\theta(4\theta+3)}{8(2\theta+1)^2} &> \frac{\lambda^2\theta(\theta+1)^2(2\theta+1)}{2\left[(2\theta+1)^2-\theta^2\right]^2}\\
    \implies 4\theta^5+11\theta^4+8\theta^3 &> 2\theta^2+4\theta+1,
\end{align}
which holds for $\theta>(1+\sqrt{17})/8$.

If agents are of different types $\theta_\ell$ and $\theta_h>\theta_\ell$, let the side payment from agent $\theta_h$ to agent $\theta_\ell$ be a fraction of agent $\theta_h$'s expected payment, that is $|P| = \alpha \mathbb{E}[w_h(\mathbf{X})]$, where $0\leq \alpha \leq 1$. 
Define $D = 2\left[(2\theta_h+1)(2\theta_\ell+1)-\theta_h\theta_\ell\right]^2,$ $R_h = \frac{Q(\theta_h)^2\sigma^2}{\lambda^2\theta_h},$  and $R_\ell = \frac{Q(\theta_\ell)^2\sigma^2}{\lambda^2\theta_h}.$
For the collusion efforts $\overline{a}_1 = a_h^M$ and $\overline{a}_2=0$, the constraint \eqref{eq:collusion} for agent $\theta_h$ reduces to
\begin{equation}
    \frac{(\theta_h+1/2)-\alpha(\theta_h+1)}{(2\theta_h+1)^2}+\alpha R_h > \frac{(\theta_\ell+1)^2(2\theta_h+1)}{D},
\end{equation}
from which we can find the following upper bound on $\alpha$
\begin{equation}
    \alpha < \frac{(2\theta_h+1)D/2-(\theta_\ell+1)^2(2\theta_h+1)^3}{D\left[(\theta_h+1)-R_h(2\theta_h+1)^2\right]}.
\end{equation}
For agent $\theta_\ell$, the individual rationality constraint \eqref{eq:collusion} becomes
\begin{equation}
    \frac{\alpha (\theta_h+1)}{(2\theta_h+1)^2}-\alpha R_h > \frac{\theta_\ell (\theta_h+1)^2(2\theta_\ell+1)}{D\theta_h}-R_\ell,
\end{equation}
leading to the lower bound on the side payment fraction 
\begin{equation}\label{eq:alphalb}
    \alpha > \frac{[\theta_\ell(\theta_h+1)^2(2\theta_\ell+1)-D\theta_hR_\ell](2\theta_h+1)^2}{D\theta_h\left[(\theta_h+1)-R_h(2\theta_h+1)^2\right]}.
\end{equation}
For this collusion to be individual rational for both agents, the bounds established for $\alpha$ must yield a non-empty interval. The conditions presented in Theorem~\ref{th:collusion} are sufficient, and can be found for the case where agent $\theta_\ell$'s output has zero variance, so that $R_\ell=0$ and the lower bound \eqref{eq:alphalb} becomes higher. 
\end{proof}

When agents are of the same type, the lower bound on $\theta$ presented in Theorem~\ref{th:collusion} indicates that, when the type value of the agents is too low, they cannot improve their  utilities by colluding to mimic the $\theta_1$-monopoly effort. Since the effort is increasing with type, this could be seen as a limiting situation in which the agents cannot exert an effort that is high enough to drive prices significantly. Similarly, when agents are of different types, collusion becomes rational when the agent with the lowest type exceeds a certain threshold; otherwise, the high-type agent must have a high enough type value for collusion to occur. We note that, under a static contract, the agents satisfying these conditions would still derive rent by colluding if they interact finitely many times with the principal.

In the following result, we relax the assumption that agents of the same type would mimic their own type's monopoly to provide conditions for which they can extract extra rent by choosing a combined effort that imitates a different monopoly.

\begin{corollary}\label{cor:diffmonopoly}
Let agents of the same type $\theta$ equally split the effort exerted, and not make side payments when colluding. Then, for $\theta<\hat{\theta}< \theta(4\theta+3)$, collusion to mimic a $\hat{\theta}$-monopoly yields a higher utility to both agents than if they mimic their own type's monopoly. 
\end{corollary}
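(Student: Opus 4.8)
The plan is to reduce the claim to comparing two explicit rational functions of $\hat\theta$. First I would write down the utility of a type-$\theta$ agent under the proposed deviation. If the two agents split evenly the effort needed to reach the $\hat\theta$-monopoly level, each exerts $\overline a_i = a_{\hat\theta}^M/2 = \lambda\hat\theta/(2(2\hat\theta+1))$, so their combined effort equals $a_{\hat\theta}^M = \lambda\hat\theta/(2\hat\theta+1)$ and $\lambda-\overline a_i-\overline a_{-i} = \lambda(\hat\theta+1)/(2\hat\theta+1)$. Substituting into $u_i = \overline a_i(\lambda-\overline a_i-\overline a_{-i}) - Q(\theta)^2\sigma^2 - \overline a_i^2/(2\theta)$ and simplifying yields
\begin{equation*}
    u_i^{\hat\theta} \;=\; \frac{\lambda^2\hat\theta\bigl[\,4\theta(\hat\theta+1)-\hat\theta\,\bigr]}{8\theta\,(2\hat\theta+1)^2} \;-\; Q(\theta)^2\sigma^2 .
\end{equation*}
The structural point I would emphasize is that the effort cost $\overline a_i^2/(2\theta)$ is evaluated at the \emph{true} type $\theta$ while the payment term only feels the aggregate effort, and that the noise contribution $Q(\theta)^2\sigma^2$ is independent of which monopoly is mimicked, so it drops out of every comparison below.

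Next I would observe that mimicking the agents' own $\theta$-monopoly is exactly the case $\hat\theta=\theta$ of the display above, which (as already computed in the proof of Theorem~\ref{th:collusion}) equals $\lambda^2\theta(4\theta+3)/(8(2\theta+1)^2) - Q(\theta)^2\sigma^2$. I would then form the difference $u_i^{\hat\theta}-u_i^{\theta}$ and clear the strictly positive common denominator $8\theta(2\hat\theta+1)^2(2\theta+1)^2$; the sign of the difference is then the sign of the quadratic
\begin{equation*}
    B(\hat\theta) \;:=\; \hat\theta\bigl[\,4\theta(\hat\theta+1)-\hat\theta\,\bigr](2\theta+1)^2 \;-\; \theta^2(4\theta+3)(2\hat\theta+1)^2 .
\end{equation*}
A short expansion shows that the degree-three and degree-two (in $\theta$) parts of the coefficients of $\hat\theta^2$ and of $\hat\theta$ coincide between the two products and cancel, leaving $B(\hat\theta) = -\hat\theta^2 + 4\theta(\theta+1)\hat\theta - \theta^2(4\theta+3)$. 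Its discriminant is $16\theta^2(\theta+1)^2 - 4\theta^2(4\theta+3) = 4\theta^2(2\theta+1)^2$, a perfect square, hence $B(\hat\theta) = -(\hat\theta-\theta)(\hat\theta-\theta(4\theta+3))$. Since $\theta(4\theta+3)>\theta$ for $\theta>0$, this is strictly positive precisely when $\theta<\hat\theta<\theta(4\theta+3)$, which is the asserted condition.

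I expect the only real work — and the only place an arithmetic slip could creep in — to be the bookkeeping in the last step: verifying that the bulky $\theta^3$- and $\theta^2$-weighted terms from the two products cancel against those of $\theta^2(4\theta+3)$, and that the discriminant is a perfect square so the roots come out cleanly as $\hat\theta=\theta$ and $\hat\theta=\theta(4\theta+3)$; there is no conceptual difficulty. As a closing remark I would note that, by Theorem~\ref{th:collusion}, when $\theta>(1+\sqrt{17})/8$ the $\theta$-monopoly deviation is already individually rational, so for $\hat\theta$ in the interval above the $\hat\theta$-monopoly deviation is individually rational as well and strictly improves on it; for smaller $\theta$ the corollary still holds as a pure comparison between the two collusion strategies.
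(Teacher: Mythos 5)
Your proposal is correct and follows essentially the same route as the paper: you form the same per-agent utility $\lambda^2\hat\theta(4\theta\hat\theta+4\theta-\hat\theta)/(8\theta(2\hat\theta+1)^2)$ under the $\hat\theta$-monopoly split, compare it to the $\theta$-monopoly value $\lambda^2\theta(4\theta+3)/(8(2\theta+1)^2)$, and reduce to the same inequality the paper states. The only difference is that you carry out the algebra the paper leaves implicit ("this inequality can be evaluated to show..."), correctly factoring the resulting quadratic as $-(\hat\theta-\theta)(\hat\theta-\theta(4\theta+3))$, which is a welcome completion rather than a different approach.
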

\begin{proof}
We check if it is individual rational for both agents to mimic a $\hat{\theta}$-monopoly, for any $\hat{\theta}\neq\theta$. 
For this case, the agents have the same individual rationality constraint, given by
\begin{equation}
    \frac{\lambda^2\hat{\theta}(4\theta\hat{\theta}+4\theta-\hat{\theta})}{8\theta(2\hat{\theta}+1)^2}>\frac{\lambda^2\theta(4\theta+3)}{8(2\theta+1)^2},
\end{equation}
which compares the agent's utility in a collusion at a $\hat{\theta}$-monopoly and at a $\theta$-monopoly. This inequality can be evaluated to show that it only holds if $\theta<\hat{\theta}\leq \theta(4\theta+3)$.
\end{proof}

Corollary~\ref{cor:diffmonopoly} shows that it is not individual rational for agents of the same type $\theta$ to jointly exert a $\hat{\theta}$-monopoly effort for any $\hat{\theta} < \theta$. In this case, the payments received would be too low for extra rent to be derived. On the other hand, pretending to be a monopoly of a type $\hat{\theta}>\theta$ can be beneficial to both agents, as long as this higher type satisfies the upper bound presented. For any other $\hat{\theta}> \theta(4\theta+3)$, the effort cost would dominate,  
nullifying any possible increase in the payment received.

\section{Dynamic Contract}\label{sec:dynamic}
If the principal and the agents interact continuously in an infinite horizon setting,  the principal is able to perform a noisy verification of the agents' outputs, so as to learn about their behavior and adjust the contract if necessary. 
For notational simplicity, in what follows, let
\begin{equation}
    \hat{\lambda}_i = \frac{\lambda^*(\theta_{-i}+1)}{(2\theta_1+1)(2\theta_2+1)-\theta_1\theta_2}.
\end{equation}
 
In the dynamic contract, the principal collects a sequence of $n$ outputs $\mathcal{X}_i:=\{X_i^1,...,X_i^n\}$ from each agent $i$, and performs the following hypothesis test:
\begin{align}
    \delta^i(\mathcal{X}_i) &= \begin{cases}
    H_0^i& \text{if}~ \big|\overline{X}_i-\hat{\lambda}_i\theta_i\big|<\epsilon 
    \\
    H_1^i& \text{otherwise}
    \end{cases}
    \raisetag{1\normalbaselineskip}
\end{align}
where $\overline{X}_i = \frac{1}{n}\sum_{k=1}^n X_i^k$
is the sample mean. The hypothesis $H_0^i$ indicates that agent $i$ is playing the Cournot equilibrium, whereas $H_1^i$ flags that agent $i$ is deviating from this strategy, which is indicative that collusion is underway. If both agents are flagged, the principal updates the payment function so as to penalize them for the remainder of the time, disincentivizing collusion. This verification process is repeated every $n$ time instants, and the principal utilizes the latest $n$ samples to perform the hypothesis test, as detailed in Algorithm~\ref{alg:contract}.

\begin{algorithm}
\caption{Dynamic Contract}
\label{alg:contract}
\begin{algorithmic}[1]
\State $k\gets 1$
\For{$k\leq n$}
\State Principal issues contract \eqref{eq:wcournot} with $\lambda$ as in Eq. \eqref{eq:lambdastar}
\If {Agent $i$ accepts contract and exerts effort $a_i$}
\State $X_i^k \gets a_i+Q(\theta_i)\eta_i$
\EndIf
\State $k\gets k+1$
\EndFor
\Repeat
\If{Any agent rejects any contract for $k\in[k-n+1,k]$}
\State $\lambda \gets 0$
\ElsIf {$\delta^i(X_i^{k-n+1},...,X_i^k)=H_1^i~\forall i$}
\State $\lambda \gets 0$ 
\Else
\State $\lambda$ unchanged
\EndIf
\State $t\gets 1$
\For{$t\leq n$}
\State Principal issues contract \eqref{eq:wcournot} with most recent $\lambda$
\If {Agent $i$ accepts contract}
\State $X_i^t \gets a_i+Q(\theta_i)\eta_i$
\EndIf
\State $k\gets k+1$
\State $t\gets t+1$
\EndFor
\Until {$k\leq K$}
\end{algorithmic}
\end{algorithm}

\begin{theorem}
For the dynamic contract in Algorithm~\ref{alg:contract}, if the time window during which data is collected is $n\in o(K)$, then the principal can detect if the agents are playing off-equilibrium strategies almost surely as $K\to\infty$.
\end{theorem}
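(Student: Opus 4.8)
The plan is to show that, under the hypothesis test $\delta^i$ applied to the sample mean $\overline{X}_i$ of $n$ consecutive outputs, both the probability of a false alarm (flagging a Cournot-playing agent) and the probability of a missed detection (failing to flag a colluding agent) vanish as the window size $n$ grows, and then to argue that the condition $n\in o(K)$ guarantees infinitely many independent test windows before the horizon $K$, so that detection occurs almost surely in the limit $K\to\infty$.

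First I would fix the behavior of $\overline{X}_i$ under each hypothesis. If agent $i$ plays the Cournot equilibrium effort $a_i^*$ from \eqref{eq:aopt}, then $X_i^k = a_i^* + Q(\theta_i)\eta_i^k$ with $a_i^* = \hat{\lambda}_i\theta_i$ by the definition of $\hat{\lambda}_i$, so $\mathbb{E}[X_i^k] = \hat{\lambda}_i\theta_i$ and the $X_i^k$ are i.i.d.\ with bounded support (since $\eta_i\in[\underline{\eta}_i,\overline{\eta}_i]$). By the strong law of large numbers $\overline{X}_i \to \hat{\lambda}_i\theta_i$ almost surely, and more quantitatively, since $X_i^k$ is bounded, Hoeffding's inequality gives $\mathbb{P}(|\overline{X}_i - \hat{\lambda}_i\theta_i| \geq \epsilon) \leq 2\exp(-c_i n\epsilon^2)$ for a constant $c_i$ depending only on $Q(\theta_i)$ and the noise support; hence the per-window false-alarm probability decays exponentially in $n$. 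Under collusion, at least one agent exerts an off-equilibrium effort (the monopoly-mimicking efforts $\overline{a}_i = a_h^M$, $0$, or $a_1^M/2$ of Theorem~\ref{th:collusion}), whose mean differs from $\hat{\lambda}_i\theta_i$ by a fixed nonzero gap $\Delta_i$; choosing $\epsilon < \min_i |\Delta_i|/2$ and applying Hoeffding again bounds the per-window missed-detection probability by $2\exp(-c_i' n(|\Delta_i|-\epsilon)^2)$, also exponentially small in $n$.

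Next I would combine the windows. There are $\lfloor K/n\rfloor$ disjoint windows, and because $n\in o(K)$ this number tends to infinity with $K$. Under persistent collusion, the events ``both agents flagged in window $j$'' are independent across $j$ (the noise is i.i.d.), each with probability at least $1-2\exp(-cn\epsilon'^2)$ for suitable constants; the probability that collusion is never detected is therefore at most $\big(2\exp(-cn\epsilon'^2)\big)^{\lfloor K/n\rfloor}$, which goes to $0$ as $K\to\infty$. Symmetrically, if the agents play Cournot, a Borel--Cantelli argument over the windows (using that false-alarm probabilities are summable once $n$ grows, or simply that each window's false alarm is vanishingly rare) shows the principal almost surely does not misclassify honest play as collusion in the tail. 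Combining, the principal's decision rule is consistent: off-equilibrium behavior is detected almost surely as $K\to\infty$.

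The main obstacle is the bookkeeping around the rate at which $n$ must grow relative to $K$: the statement allows any $n\in o(K)$, but for the almost-sure conclusion one needs $n=n(K)\to\infty$ as well (otherwise a fixed finite $n$ leaves a constant per-window error that Borel--Cantelli would actually use to show infinitely many false alarms). I would therefore read $n\in o(K)$ together with the implicit requirement $n\to\infty$, make the dependence $n=n(K)$ explicit, and check that $\lfloor K/n(K)\rfloor\to\infty$ while $\exp(-c\,n(K)\epsilon^2)\to 0$ fast enough that the relevant series converge — e.g.\ $n(K)=\lceil\log K\rceil$ works. A secondary subtlety is verifying that, after a flagged window triggers $\lambda\gets 0$, the agents' subsequent best response is to reject the (now worthless) contract, so that the ``both agents flagged'' absorbing event is indeed reached and the detection claim is about a genuinely off-equilibrium path; this follows from the IR constraint in \eqref{eq:principal} since $\lambda=0$ makes all payments zero while effort remains costly.
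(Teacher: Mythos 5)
Your per-window analysis is exactly the paper's: Hoeffding's inequality bounds the false-alarm probability by $2\exp\bigl(-2n\epsilon^2/(\overline{\eta}_i-\underline{\eta}_i)^2\bigr)$, and the missed-detection probability vanishes because under collusion $\overline{X}_i$ concentrates around an off-equilibrium mean separated from $\hat{\lambda}_i\theta_i$. Where you go beyond the paper is in the aggregation step: the paper stops at ``both per-window error probabilities vanish as $n\to\infty$, hence detection as $K\to\infty$ with $n\in o(K)$,'' whereas you explicitly count the $\lfloor K/n\rfloor$ independent windows, bound the never-detect probability by a product, and flag the Borel--Cantelli issue on the false-alarm side. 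Your observation that $n\in o(K)$ alone does not force $n\to\infty$ (a constant window is in $o(K)$, and a constant per-window false-alarm rate would then trigger infinitely many false alarms almost surely) identifies a real imprecision in the theorem statement that the paper's own proof silently glosses over by treating $n\to\infty$ as given; making $n=n(K)\to\infty$ with $n(K)/K\to 0$ explicit, as you do, is the correct reading. Your closing remark about the agents' response after $\lambda\gets 0$ is likewise a point the paper does not address but is needed to interpret the detection claim as an equilibrium statement.
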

\begin{proof}
When agent $i$ plays Cournot, the probability of false alarm can be bounded as
\begin{equation*}
    \text{Prob}\left\{\left|\overline{X}_i-\hat{\lambda}_i\theta_i\right|\geq\epsilon\right\}\leq 2\exp{\left(-\frac{2n\epsilon^2}{(\overline{\eta}_i-\underline{\eta}_i)^2}\right)}
\end{equation*}
using Hoeffding's inequality.
This probability falls exponentially to zero as $n\to\infty$. If the agents collude, the probability of missed detection is given by $\text{Prob}\{|\overline{X}_i-\hat{\lambda}_i\theta_i|<\epsilon\}$. In this case, the sample mean $\overline{X}_i\to a_i$ as $n\to\infty$ by the law of large numbers. Since the efforts exerted by the colluding agents are different from their Cournot equilibrium, $\text{Prob}\{|\overline{X}_i-\hat{\lambda}_i\theta_i|<\epsilon\}\to 0$. Thus, as $K\to \infty$, the probability of incorrect detection falls to zero as long as $n\in o(K)$.
\end{proof}

The last result indicates that the principal can eliminate the incentive for collusion in the long run by dynamically updating the contract according to the outputs received. 

\section{Conclusion}\label{sec:conclusion}
We formulated and analyzed a contract design problem for a principal who seeks to incentivize agents to exert costly effort to perform a task. We considered agents to be coupled in their compensation through a Cournot-like payment, and allowed them to collude. For a static scenario, we showed that agents can extract rent by colluding. 
However, if the principal interacts infinitely often with the agents,  a dynamic contract can be implemented by the principal to use historical data as a way to detect and disincnetivize such collusion. 
\ifCLASSOPTIONcaptionsoff
  \newpage
\fi



\bibliographystyle{IEEEtran}
\bibliography{references}

%








\end{document}